\theoremstyle{definition}
\newcommand*{\mot}{~}
\newcommand{\uppi}{\text{\greektext{p}\latintext}}
\newcommand{\imun}{\textnormal{{i}}}  
\newtheorem{Theorem}{Theorem}
\newtheorem{Lemma}{Lemma} 
\newtheorem{Def}{Definition}
\def\Z{\mathbb{Z}}
\def\R{\mathbb{R}}
\newcommand{\ket}[1]{|{#1}\rangle}
\newcommand{\states}{\mathcal{S}}
\newcommand{\context}{M}
\newcommand{\inferable}{\mathcal{I}}
\newcommand{\joint}{\mathcal{J}}
\newcommand{\cheap}{\mathcal{O}}
\newcommand{\magic}{\mathcal{M}}
\newcommand{\scheme}{\magic_\cheap}
\newcommand\trick[1]{}
\begin{document}

\title{Contextuality as a resource for models of quantum computation on qubits}

\author{{\small $\text{Juan Bermejo-Vega}^{1,2}$, $\text{Nicolas Delfosse}^{3,4}$, $\text{Dan E. Browne}^5$, $\text{Cihan Okay}^6$, $\text{Robert Raussendorf}^7$}}

\affiliation{{\scriptsize  1: Dahlem Center for Complex Quantum Systems, Freie Universit{\"a}t Berlin, 14195 Berlin, Germany,\\
2: Max-Planck Institut f{\"u}r Quantum Optics, Theory Division, 85748 Garching, Germany,\\
3: Institute for Quantum Information and Matter, California Institute of Technology, Pasadena, California 91125, USA,\\ 
4: Department of Physics and Astronomy, University of California, Riverside, California 92521, USA,\\ 
5: Department of Physics and Astronomy, University College London, Gower Street, London WC1E 6BT, United Kingdom,\\  
6: Department of Mathematics, University of Western Ontario, London, Ontario N6A 5B7, Canada,\\
7: Department of Physics and Astronomy, University of British Columbia, Vancouver, British Columbia V6T 1Z1, Canada}}


\begin{abstract}
A central question in quantum computation is to identify the resources that are responsible for quantum speed-up. Quantum contextuality has been recently shown to be a resource for quantum computation with magic states for odd-prime dimensional qudits and two-dimensional systems with real wavefunctions. The phenomenon of state-independent contextuality poses a priori an obstruction to characterizing the case of regular qubits, the fundamental building block of quantum computation. Here, we establish contextuality of magic states as a necessary resource for a large class of quantum computation schemes on qubits. We illustrate our result with a concrete  scheme related to measurement-based quantum computation.
\vspace{-20pt}
\end{abstract}


\maketitle


The model of quantum computation by state injection (QCSI) \cite{BK}  is a leading paradigm of  fault-tolerance quantum computation. Therein, quantum gates are restricted to belong to a small set of classically simulable  gates, called Clifford gates \cite{Stab}, that admit simple fault-tolerant implementations \cite{Gottesman98TFQC}.  Universal quantum computation is achieved via injection of ``magic states'' \cite{BK}, which  are  the source of  quantum computational power of the model.

A central question in QCSI is to characterize the physical  properties that magic states need to exhibit in order to serve as universal resources. In this regard, quantum contextuality has recently been established as  a necessary resource for QCSI. This was first achieved for  \emph{quopit}  systems \cite{NegWi,Howard}, where the  local  Hilbert space dimension is an odd prime power,  and subsequently for local dimension two
with the case of  \emph{rebits}\mot\cite{ReWi}. In the latter,  the density matrix is constrained to be real at all times.

In this Letter we ask ``Can contextuality be established as a computational resource for QCSI on \emph{qubits}?''. This is not a straightforward extension of the quopit case because the multiqubit setting is complicated by the presence of state-independent contextuality among Pauli observables \cite{Merm2,Peres}. Consequently, every quantum state of $n\geq 2$ qubits is contextual with respect to  Pauli measurements, including the completely mixed one \cite{Howard}. It is thus clear that contextuality of  magic states  alone  cannot be a computational resource for every QCSI scheme on qubits. 

Yet, there exist qubit QCSI schemes for which contextuality of magic states \emph{is} a resource, and we identify them in this Letter.  
Specifically, we consider qubit QCSI schemes ${\cal{M}}_{\cal{O}}$ that satisfy the following two constraints: 
\begin{enumerate}[label=\textnormal{(C\arabic*)},leftmargin=21pt]
\item  Resource character. There exists a quantum state that does not exhibit contextuality with respect to  measurements available in ${\cal{M}}_{\cal{O}}$. \label{con:SIC}
\item Tomographic completeness. For any state $\rho$, the expectation value of any Pauli observable  can be inferred  via the allowed operations of the scheme.\label{con:TOM}
\end{enumerate}
The motivation for these constraints is the following. 

Condition \ref{con:SIC} constitutes  a minimal principle  that unifies,  simplifies and extends the quopit \cite{Howard} and rebit \cite{ReWi} settings.  While seemingly a weak constraint, it excludes the possibility of Mermin-type state-independent contextuality \cite{Merm2,Peres} among the available measurements (see Lemma~\ref{lemma:PhaseCovention} below). A priori, the absence of state-independent contextuality comes at a price. Namely, for any QCSI scheme  ${\cal{M}}_{\cal{O}}$ on $n\geq 2$ qubits, not all $n$-qubit Pauli observables can be measured. Thus, the question arises of whether this limits access to all $n$ qubits for measurement. As we show in this Letter, this does not have to be the case.

Addressing this question, we impose tomographic completeness as our technical condition for a “true” $n$-qubit QCSI scheme, cf.\mot\ref{con:TOM}. It means that any quantum state can be fully measured given sufficiently many copies. The rebit scheme\mot\cite{ReWi}, for example, does not satisfy this.

One of our results is that for any number $n$ of qubits there exists a QCSI scheme that satisfies both conditions \ref{con:SIC} and \ref{con:TOM}. The reason why both conditions can simultaneously hold lies in a fundamental distinction between observables that can be measured directly in a given qubit QCSI scheme from those that can only be inferred by measurement of other observables. The resulting qubit QCSI schemes resemble their quopit counterparts\mot\cite{NegWi,Howard} in the absence of state-independent contextuality, yet have full tomographic power for the multiqubit setting.

The main result of this Letter
is Theorem \ref{Res1}. It says that if the initial (magic) states of a qubit QCSI scheme are describable by a noncontextual hidden variable model (NCHVM) it becomes fundamentally impossible to implement a universal set of gates. We  highlight that  Theorem\mot\ref{Res1}  applies generally to \emph{any} scheme fulfilling the condition \ref{con:SIC}, including that of Ref.\mot\cite{ReWi}.

The condition \ref{con:SIC} plays a pivotal role in our analysis. It is clear that contextuality of the magic states can be a resource only if condition \ref{con:SIC} holds. In this Letter  
we establish the converse, namely that contextuality of the magic states is a resource for QCSI {\em{if}} condition\mot\ref{con:SIC} holds.  Therefore, condition\mot{}\ref{con:SIC} is the structural element that unifies the previously discussed quopit \cite{Howard} and rebit \cite{ReWi} case, and the qubit scenarios discussed here. Together, condition \ref{con:SIC} and Theorem \ref{Res1} characterize the contextuality types that are needed in quantum computation via state injection, showing that state-dependent contextuality with respect to Pauli observables is a universality resource.

As a final remark, we note that the measurements available in QCSI schemes  satisfying \ref{con:SIC} preserve positivity of suitable Wigner functions \cite{Raussendorf15QubitQCSI}.\\

\emph{Setting.}---An $n$-qubit Pauli observable $T_a$ is a  hermitian operator  with $\pm1$ eigenvalues of  form
\begin{equation}\label{eq:PauliOperator}
T_a:=\xi{(a)}Z(a_Z)X(a_X):=\xi{(a)}\bigotimes_{i=1}^n Z_i^{a_{Z_i}}\bigotimes_{j=1}^n X_j^{a_{X_j}},
\end{equation}
where $a:=(a_Z,a_X)$ is a $2n$-bit string and $\xi(a)$ is a phase. Pauli observables define an operator basis that we call $\mathcal{T}_n$.\\

A qubit scheme $\magic_\cheap$  of quantum computation via state injection (QCSI)  consists  of a resource $\magic$ of initial ``magic'' states and  3 kinds of allowed operations:
\begin{enumerate}
\item  Measurement of any Pauli observable in a set $\mathcal{O}$.
\item A group $G$ of ``free'' Clifford  gates  that preserve $\mathcal{O}$ via conjugation up to a  global phase.
\item Classical processing and feedforward.
\end{enumerate}
Adaptive circuits of operations 1-3 may be combined  with classical postprocessing in order to simulate  measurements of Pauli observables that are not in $\cheap$ (cf.\mot{}Fig.\mot\ref{fig:Inferable}).   We name the latter  ``\emph{inferable}'' and let $\inferable$ be the superset of $\cheap$ defined by them. Analogously, we let $\joint$ be the set of sets of compatible Pauli observables that can be inferred jointly, which  define the ``contexts'' of our computational model. As shown in Fig.\  \ref{fig:Inferable}, not every set of  compatible Pauli observables is necessarily in $\joint$.  Yet,  $A\in\inferable$ implies that $\{A\}\in \joint$. Furthermore, for any  pair of  observables  $\{A,B\}\in\joint$ and $\alpha\in\R$, the observables $AB$, $\alpha A$ can be inferred jointly by measuring $A,B$, since the eigenvalues of the latter determine those of the former. Hence, 
\begin{equation}\label{eq:InferabilityExample}
\{A,B\}\in\joint\quad \Rightarrow \quad \{A,B,AB,\alpha A\}\in\joint, \,\,\,\forall\alpha\in\R.
\end{equation}
Constraint \ref{con:TOM} holds if and only if $\mathcal{T}_n\subset \inferable$, i.e.,  if and only if the outcome distribution of any Pauli observable can be sampled via measurements in $\mathcal{O}$ and classical postprocessing.\\
\begin{figure}[h]
\begin{center}
\includegraphics[width=6.75cm]{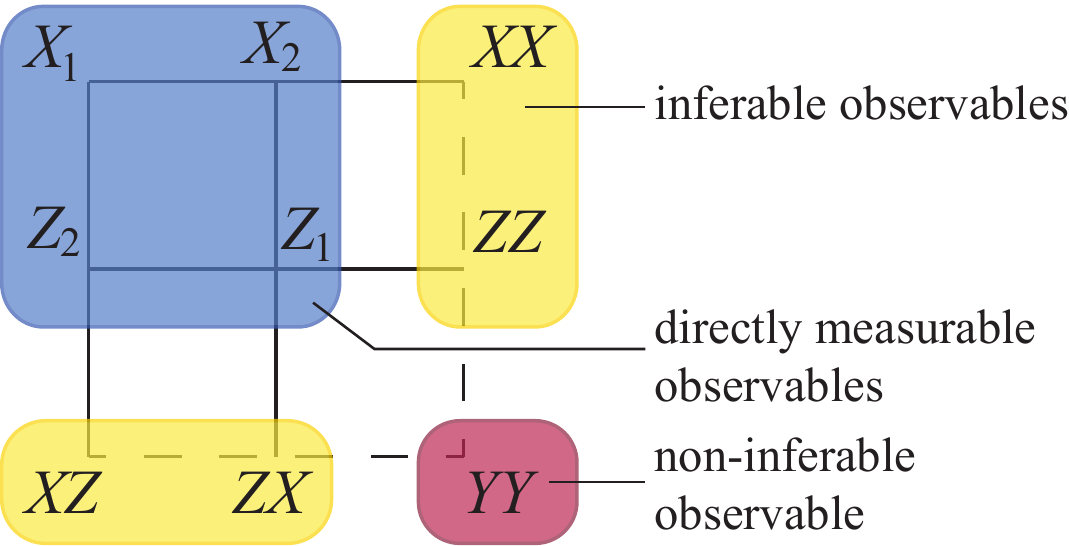}\caption[Inferring Observables]{\label{fig:Inferable} We consider an example scheme $\scheme$ on  two qubits with $\cheap=\{X_1,X_2,Z_1,Z_2\}$.  Straight lines connect maximal sets of jointly inferable observables.  Here, the correlator $X_1 X_2$ ($Z_1 Z_2$) is not in  $\cheap$ but  can be  inferred  by measuring $X_1,X_2$ ($Z_1,Z_2$) and multiplying their  outcomes. (This  scheme is reminiscent of  the syndrome measurement of subsystem codes \cite{Poulin}.) Yet, $X_1 X_2$ cannot be inferred jointly with $Z_1 Z_2$ because  a forbidden measurement of  $X_1,X_2,Z_1,Z_2$ would be required to reproduce all quantum  correlations, but after measuring, e.g.,  $Z_1$ and $Z_2$ to infer $Z_1Z_2$ the outcome statistics of $X_1X_2$ become uniformly random. Similarly, $X_1Z_2$ and $Z_1X_2$ can be separately inferred but not jointly. Further, $YY$ \emph{cannot} be inferred (observables in $\cheap$ cannot distinguish  its 	eigenstates).} 
\end{center}
\vspace{-5pt}
\end{figure}

\emph{Contextuality.}---Above, imposing \ref{con:SIC} means that there exists a quantum state $\rho$ whose measurement statistics can  be reproduced by a \emph{noncontextual hidden variable model} (NCHVM), which we introduce next.
{\begin{Def}\label{HVM1}
A \textnormal{NCHVM} $(\states, q_\rho,\Lambda)$ for the state $\rho$ with respect to a scheme $\scheme$  consists of a probability distribution $q_\rho$ over a set $\states$ of internal states and a set $\Lambda=\{\lambda_\nu\}_{\nu\in\states}$ of value  assignment functions  $\lambda_\nu:\inferable\rightarrow\{\pm 1\}$ that fulfill:
\begin{itemize*}
\item[(i)]  For any $\lambda_\nu\in\Lambda$ and $\context\in\joint$  the real numbers $\displaystyle\{\lambda_\nu(A)\}_{A\in M}$ are compatible eigenvalues: i.e.\,  there exists a quantum state $\ket{\psi}$ such that
\begin{equation}\label{eq:JointEigenvalue}
A\ket{\psi}=\lambda_\nu(A)\ket{\psi}, \quad\forall A\in \context.
\end{equation}
\item[(ii)] The distribution $q_\rho$ satisfies
\begin{equation}\label{eq:ExpectedValue}
\langle A \rangle_\rho = \mathrm{tr}(A \rho)=\sum_{\nu\in\mathcal{S}} \lambda_\nu(A)q_\rho(\nu), \quad \forall A\in\inferable
\end{equation}
\end{itemize*} The state $\rho$ is said to be ``contextual'' or to ``exhibit contextuality'' if no \textnormal{NCHVM} with respect to  $\scheme$ exists. 
\end{Def}
Qubit QCSI for which \emph{all} possible inputs exhibit contextuality are forbidden by \ref{con:SIC}. Specifically, in this Letter, 
$\cheap$ must be a strict subset of $\mathcal{T}_n$.\\

\emph{Main result}.---We  now establish contextuality as a resource for quantum computational universality for all qubit QCSI schemes  that fulfill \ref{con:SIC}.  Below, we call a scheme $\scheme$  \emph{universal} if  for any integer $n\geq 1$ and $V\in U(2^n)$  there exists a finite-size circuit of $\scheme$ operations that prepares the $n$-qubit state $V\ket{0}$ up to any positive trace-norm error.
\begin{Theorem}\label{Res1}
A qubit QCSI scheme $\scheme$ satisfying \ref{con:SIC} is  universal  for $n\geq 3$  qubits   only if its magic states exhibit contextuality.
\end{Theorem}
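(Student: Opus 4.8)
The plan is to argue by contraposition: assuming that \emph{every} state in $\magic$ admits a NCHVM with respect to $\scheme$, I will show that $\scheme$ cannot be universal on $n\ge 3$ qubits. The argument splits into three parts. (1) Show that the property ``$\rho$ admits a NCHVM with respect to $\scheme$'' is closed under every admissible operation of $\scheme$, so that every state preparable by $\scheme$ has it. (2) Show that, for $n\ge 3$, universality forces the inferable structure $(\inferable,\joint)$ of a scheme obeying \ref{con:SIC} to be rich enough that \emph{some} $n$-qubit state is contextual with respect to $\scheme$. (3) Combine the two: the contextual state of part (2) is of the form $V\ket{0}$ and would therefore have to be preparable, contradicting part (1); hence $\scheme$ is not universal.

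For part (1), I would first note that, since condition (ii) of Definition~\ref{HVM1} is linear in $\rho$, the set $\mathcal{N}$ of states admitting a NCHVM with respect to $\scheme$ is convex and closed; it therefore suffices to follow a single branch of an adaptive computation and to verify $\magic\subseteq\mathcal{N}$ together with closure of $\mathcal{N}$ under operations 1--3. For a free Clifford $g\in G$, conjugation by $g$ maps $\cheap$ into itself up to global phases, hence permutes $\inferable$ and carries $\joint$ to $\joint$; if $(\states,q_\rho,\Lambda)$ is a NCHVM for $\rho$ then $(\states,q_\rho,\Lambda')$ with $\lambda'_\nu(A):=\lambda_\nu(g^{-1}Ag)$ is one for $g\rho g^\dagger$, conditions (i)--(ii) following from $\tr(A\,g\rho g^\dagger)=\tr(g^{-1}Ag\,\rho)$. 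Classical processing and feed-forward act only on classical registers and on $q_\rho$, so they are harmless. The substantive case is a measurement of some $A\in\cheap$: by condition (ii) the outcome $s\in\{\pm1\}$ occurs with probability $\sum_{\nu\,:\,\lambda_\nu(A)=s}q_\rho(\nu)$, and one must exhibit a NCHVM for the post-measurement state $\Pi^A_s\rho\Pi^A_s/p(s)$. I would build it by Bayesian conditioning of $q_\rho$ on the event $\{\lambda_\nu(A)=s\}$, followed by a stochastic update of the hidden variable that re-randomizes the values of the inferable observables disturbed by the measurement (those not jointly inferable with $A$) in the unique way reproducing the quantum post-measurement expectation values, and then check conditions (i)--(ii) for the updated model.

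The hard part, and the point where hypothesis \ref{con:SIC} is essential, is showing that this measurement update keeps $q$ a genuine (non-negative, normalized) probability distribution and keeps the value-assignment functions compatible with all contexts in the sense of Definition~\ref{HVM1}(i). The obstruction to this is precisely Mermin-type state-independent contextuality among the inferable observables: that is exactly what would force an ``updated $q$'' to take negative values. By the phase-convention Lemma~\ref{lemma:PhaseCovention}, \ref{con:SIC} rules this out and supplies a globally consistent sign convention on $\inferable$, which is what lets one assign compatible eigenvalues coherently across overlapping contexts and propagate them through a measurement without generating negativity. (This is the qubit analogue of the preservation of Wigner-function positivity under the available operations \cite{Raussendorf15QubitQCSI}.) I expect the careful book-keeping here---reconciling the coarse set $\cheap$ of directly measurable observables with the finer inferable sets $\inferable$ and $\joint$ of Fig.~\ref{fig:Inferable}, and verifying non-negativity of the updated weights---to be the main technical difficulty.

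For part (2), I would use that a universal $\scheme$ can prepare every stabilizer state to arbitrary accuracy and that universality is incompatible with a ``too small'' inferable structure: if $\joint$ supported no state-dependent contextuality witness on three qubits---e.g.\ no Mermin-star configuration, nor any relabeling of one by a free Clifford---then the entangling and non-Clifford resources needed for universality could not be injected by state teleportation, and universality would fail. Hence $\joint$ must contain such a three-qubit witness; the associated GHZ-type stabilizer eigenstate, preparable because universality supplies the requisite Clifford, then violates the corresponding non-contextuality inequality and lies outside $\mathcal{N}$, contradicting part (1). The threshold $n\ge 3$ enters exactly here, through the need for a three-qubit witness, the state-dependent Pauli witnesses available for $n\le 2$ being too weak to survive the restriction \ref{con:SIC} places on $\inferable$.
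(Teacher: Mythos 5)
Your skeleton---closure of the set of noncontextually-describable states under the free operations, followed by a Mermin/GHZ violation on a state that universality makes preparable---is the same as the paper's. Your part (1) is exactly the paper's Lemma~\ref{Lemma:HVMSimul}: the update you sketch (condition $q_\rho$ on the event $\lambda_\nu(A)=s$ and re-randomize what the measurement disturbs) is realized in the paper by $q_{\rho_{t+1}}(\nu)\propto\delta_{s_t,\lambda_\nu(a_t)}\,[q_{\rho_t}(\nu)+q_{\rho_t}(\nu+a_t)]/2$, using the translation action (\ref{Transl}) and the sign convention of Lemma~\ref{lemma:PhaseCovention}, so your plan there is sound even though you do not carry out the verification. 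The genuine gap is your part (2). You assert that universality alone forces $\joint$ to contain a three-qubit Mermin-star witness, on the grounds that otherwise ``the entangling and non-Clifford resources needed for universality could not be injected by state teleportation.'' That claim is unproved and is essentially as strong as the theorem itself: nothing in \ref{con:SIC}, \ref{con:TOM}, or universality directly guarantees that the contexts $\{A_1,A_2,A_3,A_1A_2A_3\}$ with $A_i\in\{X_i,Z_i\}$ are jointly inferable on the physical qubits---the computation could run in an encoded subspace where no such bare-qubit witness is available, and handling that case is precisely why the general proof is deferred to the companion paper \cite{Raussendorf15QubitQCSI}. The paper's own proof sidesteps this by invoking the explicitly announced extra assumption ($\star$) that every qubit is measurable in two complementary Pauli bases; then $X_i,Z_i\in\cheap$, the four Mermin-star contexts are automatically in $\joint$, universality supplies the GHZ state stabilized by $X_1X_2X_3$, $-X_1Z_2Z_3$, $-Z_1X_2Z_3$, $-Z_1Z_2X_3$, quantum mechanics gives the value $4$ for the Mermin sum, and Eqs.~(\ref{eq:LinearityOfAssignments}) bound any NCHVM by $2$. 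Either adopt ($\star$), in which case your part (2) collapses to this and your proof matches the paper's, or you must actually prove your injection claim, which you have not.

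A secondary omission: universality only guarantees preparation of $V\ket{0}$ up to arbitrarily small trace-norm error, so your part (3) needs the paper's closing observation that the contradiction is robust because the quantum-versus-HVM gap ($4$ versus $2$) is finite; without that remark the approximate state need not strictly violate the inequality argumentation as you stated it.
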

Theorem\mot\ref{Res1} applies even in the setting where the computation happens in an encoded subspace,  reproducing the  rebit results of Ref.\mot\cite{ReWi}. We provide a general proof of this fact in a companion paper\mot\cite{Raussendorf15QubitQCSI} and show it here in the encoding-free scenario under an additional  assumption, denoted ($\star$),  that every qubit must be measurable in at least two complementary Pauli bases. This requirement enforces $\scheme$ to exhibit the phenomenon of quantum complementarity and  simplifies our main argument while preserving its core structure.

The proof of theorem \ref{Res1} relies on a characterization of  noncontextual hidden variable models for  qubit QCSIs. We  make three key observations about such models.

First, by applying Def.\mot\ref{HVM1}.(i) to  $M:=\{A,B,AB,\alpha A\}\in\joint$ as in Eq.\mot{}(\ref{eq:InferabilityExample}), we derive two constraints
\begin{equation}\label{eq:LinearityOfAssignments}
\lambda_\nu(AB)=\lambda_\nu(A)\lambda_\nu(B),\quad  \lambda_\nu(\alpha A)= \alpha \lambda_\nu(A), 
\end{equation}
that any $\lambda_\nu\in\Lambda$  must fulfill  for any pair $\{A,B\}\in\joint, \alpha\in\mathbb{R}$. 

Second, we prove the following lemma. 
\begin{Lemma}\label{lemma:PhaseCovention} For any  QCSI scheme $\scheme$ fulfilling \ref{con:SIC} the phase $\xi{(a)}$ in Eq.\mot{}(\ref{eq:PauliOperator}) can be chosen w.l.o.g.\ so that 
\begin{equation}\label{eq:PauliAdditivity}
T_aT_b=T_{a+b}\quad  \textnormal{for any triple $\{T_a,T_b,T_aT_b\}\in\joint$}.
\end{equation}
\end{Lemma}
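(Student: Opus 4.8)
The plan is to think of the phase assignment as a function $\beta:\Z_2^{2n}\to\Z_4$ (or into $\{\pm1,\pm i\}$) via $\xi(a)=i^{\beta(a)}$, and to reduce the lemma to the statement that $\beta$ can be chosen so that its "obstruction to additivity" vanishes on all jointly-inferable triples. Concretely, for the standard Pauli operators one has $Z(a_Z)X(a_X)\cdot Z(b_Z)X(b_X) = (-1)^{\langle a_X,b_Z\rangle} Z(a_Z+b_Z)X(a_X+b_X)$, so if we write $T_a = i^{\beta(a)} Z(a_Z)X(a_X)$ then $T_aT_b = i^{\beta(a)+\beta(b)-\beta(a+b)}(-1)^{\langle a_X,b_Z\rangle} T_{a+b}$. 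Hence $T_aT_b=T_{a+b}$ is equivalent to the cocycle-type condition $\beta(a+b)-\beta(a)-\beta(b) \equiv 2\langle a_X,b_Z\rangle \pmod 4$. The first step is therefore to record this identity and observe that hermiticity of $T_a$ forces $\beta(a)\equiv \langle a_X,a_Z\rangle \pmod 2$, so $\beta$ is determined mod $2$ and only its "mod-$4$ refinement" is free; equivalently, the residual freedom is to replace $\beta(a)$ by $\beta(a)+2\gamma(a)$ with $\gamma:\Z_2^{2n}\to\Z_2$ arbitrary, which changes $T_a$ by a sign.

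The second step is to invoke condition \ref{con:SIC} together with the derived linearity relations (\ref{eq:LinearityOfAssignments}). Fix an NCHVM $(\states,q_\rho,\Lambda)$ for some state $\rho$ allowed by \ref{con:SIC}, and pick any $\lambda_\nu\in\Lambda$ (we only need one). For a jointly-inferable triple $\{T_a,T_b,T_aT_b\}\in\joint$, Def.~\ref{HVM1}.(i) gives a common eigenstate, so $T_aT_b$ is genuinely $\pm1$-valued there and relation (\ref{eq:LinearityOfAssignments}) gives $\lambda_\nu(T_aT_b)=\lambda_\nu(T_a)\lambda_\nu(T_b)$. The point is that $\lambda_\nu$, restricted to the inferable Paulis, behaves like a sign character with respect to the group law whenever the product stays inferable; combined with the computation above, the "anomaly" $i^{\beta(a)+\beta(b)-\beta(a+b)}(-1)^{\langle a_X,b_Z\rangle}$ must equal $+1$ for such triples (if it were $-1$ the eigenvalue relation would be violated; if it were $\pm i$ the triple would not consist of hermitian mutually-commuting operators in the first place — this is exactly the exclusion of Mermin-type state-independent contextuality). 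So for every jointly-inferable triple the target relation (\ref{eq:PauliAdditivity}) holds up to at most a sign, and the residual sign $s(a,b):=(-1)^{\gamma(a+b)+\gamma(a)+\gamma(b)}$-type ambiguity is all that must be killed.

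The third step is the construction of the global phase convention. Build a graph (or, more precisely, work on the $\Z_2$-vector space $\Z_2^{2n}$) and choose $\gamma$ so that the sign of $T_a$ is fixed consistently: pick a basis $e_1,\dots,e_{2n}$ of $\Z_2^{2n}$, set $\gamma$ freely on the basis, and extend; the question is whether one can simultaneously force $T_aT_b=T_{a+b}$ on all inferable triples. Here I would argue that the inferable set $\inferable$, being closed under products of jointly-inferable pairs (Eq.~(\ref{eq:InferabilityExample})) and free of state-independent contextuality, carries no nontrivial such obstruction: the obstruction to choosing a global section is a class in $H^2$ of the relevant commutative "inference" structure with $\Z_2$ coefficients, and \ref{con:SIC} forces that class to vanish because a single value-assignment $\lambda_\nu$ already realizes a consistent sign choice on all of $\inferable$ — one can literally set the phase of each inferable $T_a$ so that $\lambda_\nu(T_a)$ matches the product structure, then check the relation propagates. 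I expect the main obstacle to be precisely this last point: showing that local consistency (on triples) plus the absence of Mermin-type paradoxes promotes to a single global phase choice valid on all of $\inferable$ at once, rather than just triple-by-triple. The cleanest route is to use the existence of $\lambda_\nu$ as the global "trivialization": define the sign of $T_a$ by declaring $\lambda_\nu(T_a)=+1$ whenever possible and transporting this choice along a spanning set, then verify using (\ref{eq:LinearityOfAssignments}) that no inferable triple is left inconsistent — the would-be inconsistency is exactly a Mermin square living inside $\inferable$, which \ref{con:SIC} forbids.
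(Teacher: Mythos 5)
Your central idea is the same as the paper's: condition \ref{con:SIC} supplies one value assignment $\lambda_\nu$, and the phase convention is fixed by rescaling $T_a\mapsto T_a':=\lambda_\nu(T_a)\,T_a$ (a classical relabelling of outcomes, hence genuinely ``w.l.o.g.''). Since $\lambda_\nu$ takes values in $\{\pm1\}$ on all of $\inferable$, this rescaling is always available --- no ``whenever possible'' hedge and no transport along a spanning set are needed. Moreover, the obstacle you flag as the main one (promoting triple-wise consistency to a global phase choice) does not exist: the lemma asserts $T_aT_b=T_{a+b}$ only for triples $\{T_a,T_b,T_aT_b\}\in\joint$, so the single rescaled convention need only be checked triple by triple, and there it follows from Eq.~(\ref{eq:LinearityOfAssignments}) in two lines. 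Writing $T_aT_b=\pm T_{a+b}$ (the anomaly is a sign, not $\pm i$, simply because the triple consists of commuting hermitian operators --- contextuality plays no role in excluding $\pm i$, contrary to your remark), one gets $T_a'T_b'=\lambda_\nu(T_a)\lambda_\nu(T_b)\,T_aT_b=\lambda_\nu(T_aT_b)\,T_aT_b=\lambda_\nu(\pm T_{a+b})\,(\pm T_{a+b})=\lambda_\nu(T_{a+b})\,T_{a+b}=T_{a+b}'$, using both parts of Eq.~(\ref{eq:LinearityOfAssignments}). This is exactly the paper's proof.

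What you should trim is the surrounding machinery: the cocycle/$H^2$ framing and the concluding appeal to ``a Mermin square living inside $\inferable$'' are not needed and, as stated, the latter is not an argument --- the consistency of the rescaled convention is not established by forbidding a paradox but by the direct computation above. Contextuality enters the lemma only through the existence of a single multiplicative value assignment, i.e., precisely through \ref{con:SIC} and the derived relations (\ref{eq:LinearityOfAssignments}); once you have $\lambda_\nu$, Eq.~(\ref{eq:PauliAdditivity}) is immediate.
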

\begin{proof}
Let  $\xi$ be  given  and let  $\lambda_\nu$ be a consistent value assignment for the scheme  $\scheme$. W.l.o.g., we can  redefine $\mathcal{T}_n':=\{T_a':=\lambda_\nu(T_a)T_a,T_a\in \mathcal{T}_n\}$ and $\mathcal{O'}=\{T_a',T_a\in \cheap\}$ introducing a classical relabeling of measurement outcomes, without changing any quantum feature of the scheme. Using   $T_{a+b}=\pm T_{a}T_{b}$, we obtain
\begin{align}
&T_{a+b}'={\lambda_\nu(T_{a+b})}T_{a+b}={\lambda((\pm1) T_{a}T_{b})}(\pm 1)T_{a}T_{b}\notag\\
&\stackrel{(\ref{eq:LinearityOfAssignments})}{=}(\pm1)^2 {\lambda(T_{a}T_{b})} T_{a}T_{b}\stackrel{(\ref{eq:LinearityOfAssignments})}{=}{\lambda(T_{a})}T_{a}{\lambda(T_{b})}T_{b}=T_a'T_b'.
\:\:\:\:\,\,
\qedhere
\notag
\end{align}
\end{proof}
Last,  we observe that  for any $M\in\joint$, $\ket{\psi}$ as in  Eq.\mot{}(\ref{eq:JointEigenvalue}) and $T_b\in\mathcal{T}_n$, the state  $T_b\ket{\psi}$ is a joint eigenstate of $M$: 
\begin{equation}\label{eq:ActionEigenspaces}
(\gamma T_a) T_b\ket{\psi}=\left(\lambda_\nu(\gamma T_a)(-1)^{[a,b]}\right)T_b\ket{\psi},\,\forall\gamma T_a \in M,
\end{equation} where   $[a,b]:=a_X b_Z + a_Z  b_X\bmod 2$;
combined with Eq.\mot{}(\ref{eq:LinearityOfAssignments}), this induces a group action of  $\Z_2^{2n}$ on value assignments
\begin{equation}\label{Transl}
\lambda_\nu\quad\stackrel{u}{\rightarrow} \quad\lambda_{\nu +u}(T_a) := \lambda_\nu(T_a) (-1)^{[u,a]},\;\forall u \in V.
\end{equation}
With these tools, we  arrive at a powerful intermediate result, namely,  a method to construct NCHVMs that can simulate  qubit QCSIs on  noncontextual inputs.
\begin{Lemma}\label{Lemma:HVMSimul}   For any qubit scheme $\scheme$ fulfilling \ref{con:SIC} and   any quantum circuit $\mathcal{C}$ of $\scheme$ operations, if there exists a \textnormal{NCHVM}  $(\states, q_{\rho_\text{in}},\Lambda)$ for some given input state $\rho_\text{in}$, there then exists a \textnormal{NCHVM}  $(\states, q_{\rho_\text{out}},\Lambda)$ for the   output $\rho_\text{out}:=\mathcal{C}(\rho_\text{in})$. 
\end{Lemma}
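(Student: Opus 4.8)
The plan is to prove the claim for a single elementary operation of $\scheme$ and then compose. A circuit $\mathcal{C}$ is a finite sequence of Pauli measurements $A\in\cheap$, free Clifford gates $g\in G$, and classically-controlled choices among these, so it suffices to show that every such step (including conditioning on a measurement outcome) sends an NCHVM $(\states,q_{\rho},\Lambda)$ to an NCHVM $(\states,q_{\rho'},\Lambda)$ with the \emph{same} $\states$ and $\Lambda$, changing only the distribution. As a preparatory step I would first note that we may assume $\Lambda$ closed under (a) the $\Z_2^{2n}$-translation action of Eq.\,(\ref{Transl}) and (b) the relabelings $\lambda_\nu\mapsto\lambda_\nu(g^{\dagger}(\cdot)g)$ for the finitely many $g$ used by $\mathcal{C}$: by Eqs.\,(\ref{eq:ActionEigenspaces})--(\ref{Transl}) and the fact that conjugation by $g$ maps contexts to contexts and joint eigenstates to joint eigenstates, each of these maps carries a value assignment obeying Def.\,\ref{HVM1}(i) to another one, so we may adjoin all of them to $\Lambda$ with weight $0$ in every $q$ without disturbing (i) or (ii).

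For a free Clifford gate $g$, conjugation permutes $\cheap$ (up to phase) and hence $\inferable$ and $\joint$. Writing $g\cdot\nu$ for the label with $\lambda_{g\cdot\nu}=\lambda_\nu(g^{\dagger}(\cdot)g)$, I would set $q_{\rho'}:=q_{\rho}\circ(g^{-1}\cdot)$; reindexing the sum in Def.\,\ref{HVM1}(ii) yields $\sum_\nu\lambda_\nu(B)q_{\rho'}(\nu)=\langle g^{\dagger}Bg\rangle_{\rho}=\langle B\rangle_{g\rho g^{\dagger}}$ for all $B\in\inferable$, while (i) is untouched. Classical randomness is absorbed by taking convex combinations of the branch distributions, and classical feedforward/conditioning on an outcome is exactly the projection handled in the measurement step below.

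The measurement step is the crux. Let $A=\gamma T_a\in\cheap$ be measured with outcome $s\in\{\pm1\}$, so $\rho'=\Pi_s\rho\Pi_s/p_s$ with $\Pi_s=\tfrac12(\id+sA)$ and $p_s=\mathrm{tr}(\Pi_s\rho)=\tfrac{1+s\langle A\rangle_\rho}{2}$. I would define
\[
q_{\rho'}(\nu):=\frac{1}{2p_s}\Bigl(q_{\rho}(\nu)+q_{\rho}(\nu+a)\Bigr)\,\frac{1+s\,\lambda_\nu(A)}{2}\,,
\]
where $\nu+a$ denotes the label with $\lambda_{\nu+a}(T_b)=\lambda_\nu(T_b)(-1)^{[a,b]}$ (available by closure of $\Lambda$; since $[a,a]=0$ the last factor is constant on the pair $\{\nu,\nu+a\}$). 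Using $A\in\inferable$ one checks $q_{\rho'}\geq0$ and $\sum_\nu q_{\rho'}(\nu)=1$. To verify Def.\,\ref{HVM1}(ii), fix $B\in\inferable$; a Pauli observable either commutes or anticommutes with $A$. If it anticommutes, then $\langle B\rangle_{\rho'}=0$ because $\Pi_sB\Pi_s=\Pi_s\Pi_{-s}B=0$, while on the hidden-variable side the substitution $\mu=\nu+a$ in the $q_{\rho}(\nu+a)$ term flips $\lambda(B)$ by $(-1)^{[a,b]}=-1$ and preserves $\lambda(A)$, so the two terms cancel and the sum is $0$. If it commutes, the same substitution leaves everything fixed and gives $\sum_\nu\lambda_\nu(B)q_{\rho'}(\nu)=\tfrac{1}{2p_s}\bigl(\langle B\rangle_\rho+s\sum_\nu\lambda_\nu(A)\lambda_\nu(B)q_{\rho}(\nu)\bigr)$; applying Eq.\,(\ref{eq:LinearityOfAssignments}) to the compatible pair $\{A,B\}$ and Def.\,\ref{HVM1}(ii) to $AB$ turns this into $\tfrac{\langle B\rangle_\rho+s\langle AB\rangle_\rho}{1+s\langle A\rangle_\rho}$, which equals $\langle B\rangle_{\rho'}$ by the direct computation $\Pi_s\rho\Pi_s=\tfrac14(\rho+sA\rho+s\rho A+A\rho A)$.

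The step I expect to be the main obstacle is the commuting case above: it uses that whenever $A\in\cheap$ and $B\in\inferable$ commute, the pair $\{A,B\}$ lies in $\joint$ and the product $AB$ lies in $\inferable$ --- i.e.\ a directly measurable observable can be adjoined to any inferable context with which it is compatible, and such products remain inferable. This is a purely combinatorial statement about $\cheap,\inferable,\joint$ (one has to avoid Mermin-square type obstructions, cf.\ Fig.\,\ref{fig:Inferable}) and is where the defining constraints on the scheme, including ($\star$) and Eq.\,(\ref{eq:InferabilityExample}), enter; the remaining bookkeeping --- closure of $\Lambda$, well-definedness of the translated labels, finiteness of $\states$ --- is routine.
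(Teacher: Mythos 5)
Your proposal is essentially the paper's own proof: your measurement update rule coincides with Eq.~(\ref{eq:MeasurementUpdate}) (since $\delta_{s,\lambda}=(1+s\lambda)/2$), and the commuting/anticommuting case analysis, the use of the translation action (\ref{Transl}) and of Eq.~(\ref{eq:LinearityOfAssignments}) all match, with the treatment of free gates (pushing the distribution forward instead of conjugating the gates out of the circuit) an immaterial difference. The ``obstacle'' you flag is real in the sense that the paper also uses it tacitly (it applies Eq.~(\ref{eq:LinearityOfAssignments}) to the pair $\{T_a,T_{a_t}\}$ and the induction hypothesis to $T_{a+a_t}$), but it closes without ($\star$): because $A\in\cheap$ is measured \emph{directly}, its projective measurement does not disturb any commuting observable, so measuring $A$ first and then running the inference procedure for $B\in\inferable$ on the post-measurement state reproduces the correct joint statistics (the projectors of $A$ and $B$ commute), whence $\{A,B\}\in\joint$ and $AB\in\inferable$ by Eq.~(\ref{eq:InferabilityExample}); note this is exactly where the distinction between directly measurable and merely inferable observables matters, since two commuting elements of $\inferable$ (e.g.\ $X_1X_2$ and $Z_1Z_2$ in Fig.~\ref{fig:Inferable}) need not be jointly inferable.
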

Lemma\mot\ref{Lemma:HVMSimul} establishes that contextuality  cannot be freely generated in qubit QCSI. A surprising aspect of this fact is that it holds for  circuits that contain intermediate measurements. Intuitively,  unitary gates in $\mathcal{G}$ must induce an action on the set of noncontextual states since they preserve the set $\cheap$. However, the evolution of noncontextual states under measurement is far from intuitive since the latter can often prepare states that are inaccessible to gates\mot\cite{BermejoVega_12_GKTheorem}. 

  Lemma\mot\ref{Lemma:HVMSimul}  leads to a simple classical random-walk algorithm for sampling from the output distribution of all measurements in $\mathcal{C}$,  which is further efficient if oracles for sampling from  $q_{\rho_{\text{in}}}$ and computing any $\lambda_\nu\in\Lambda$ are given.  The random walk first samples a state $\nu_0\in\states$  from $q_{\rho_{\text{in}}}$ and, upon measurement of $T_{a_t}\in\cheap$  at time\mot$t$,  outputs  $\lambda_{\nu_t}(T_{a_t})$ given  $\nu_t$ and updates  $\nu_t \rightarrow \nu_t + a$ with $1/2$ probability. The correctness of this algorithm follows from Eq.\mot(\ref{eq:MeasurementUpdate}) below and is analyzed in detailed in Ref.\mot\cite{Raussendorf15QubitQCSI}.
\begin{proof}
We fix a  phase convention for $T_a$ so that Eq.\mot{}(\ref{eq:PauliAdditivity}) in Lemma\mot{}\ref{lemma:PhaseCovention} holds and introduce a simplified notation
\begin{equation}
\notag
\lambda_\nu(a):=\lambda_\nu(T_a), \quad \textnormal{where }T_a \in \inferable, a\in\Z_2^{2n}.
\end{equation} 
Because free unitaries preserve $\cheap$ they can be propagated out of $\mathcal{C}$ via conjugation. Hence, we  can w.l.o.g.\ assume that $\mathcal{C}$  consists only of  measurements. Our proof is by induction. At time $t=1$,  $\rho_1=\rho_\text{in}$ has an NCHVM by assumption. At any other time $t+1$,  given an NCHVM   $(\states, q_{\rho_t},\Lambda)$ for   the state $\rho_t$, we construct  an NCHVM $(\states, q_{\rho_{t+1}},\Lambda)$   for   $\rho_{t+1}$. Specifically, let $T_{a_t}\in {\cheap}$ be the observable measured at time $t$ with corresponding outcome $s_t\in\{\pm 1\}$,   ${s}_{\prec t}:=(s_1,\ldots,s_{t})$ be the string of prior measurement records, and $p(s_t|{s}_{\prec t})$ the conditional probability of measuring $s_t$;
we will now show that $\rho_{t+1}$ admits the hidden-variable representation
\begin{align}\label{eq:MeasurementUpdate}
q_{\rho_{t+1}}(\nu) &:= \frac{\delta_{{s_t},\lambda_\nu(a_t)}}{p(s_t|{s}_{\prec t})}   \frac{q_{\rho_{t}}(\nu) +  q_{\rho_{t}}(\nu +a_t)}{2},\vspace{2mm}
\end{align}
where    $p(s_t|{s}_{\prec t})$ can be predicted by the HVM, since  $2p(s_t|{s}_{\prec t})=\left\langle I+{s_t}T_{a_t}\right\rangle_{\rho_t} = \langle I\rangle_{\rho_t} +  {s_t}   \langle T_{{a}_t} \rangle_{\rho_t}$---which are known by the induction promise. Our goal is to show that $(\states, q_{\rho_{t+1}},\Lambda)$  predicts the expected value of any   $T_a \in\inferable$  measured at time $t+1$. For this, we derive a useful expression,
\begin{widetext}
\begin{align}
\langle T_{a} \rangle_{\rho_{t+1}}^{\mathrm{HVM}} = \displaystyle{{\sum_{\nu \in {\cal{S}}}} q_{\rho_{t+1}}(\nu) \lambda_\nu(a)} \vspace{1mm}
&\stackrel{(\ref{eq:MeasurementUpdate})}{=} \sum_{\nu \in {\cal{S}}} \tfrac{\delta_{{s_t},\lambda_\nu(a_t)}q_{\rho_t}(\nu)}{ 2p(s_t|{s}_{\prec t})}  \lambda_\nu(a)+\sum_{\nu \in {\cal{S}}}\tfrac{\delta_{{s_t},\lambda_\nu(a_t)}q_{\rho_t}(\nu+a_t)}{ 2p(s_t|{s}_{\prec t})} \lambda_\nu(a).\\
&\stackrel{(\ref{Transl})}{=}
 \sum_{\nu \in {\cal{S}}} \tfrac{\delta_{{s_t},\lambda_\nu(a_t)}q_{\rho_t}(\nu)}{ 2p(s_t|{s}_{\prec t})}  \lambda_\nu(a)+\tfrac{\delta_{{s_t},\lambda_\nu(a_t)}q_{\rho_t}(\nu)}{ 2p(s_t|{s}_{\prec t})} \lambda_\nu(a)(-1)^{[a, a_t]},\notag
\end{align}
\end{widetext}
which we evaluate on two cases:

\textbf{(A)} $T_{a}, T_{a_t}$ anticommute, hence, $[a,a_t]=1$. We get $\langle T_{a} \rangle_{\rho_{t+1}}^{\mathrm{HVM}}=0$, in agreement with quantum mechanics.

\textbf{(B)} $T_{a}, T_{a_t}$ commute. In this case $[a,a_t]=0$.  Using the identity $\delta_{s,\lambda}=(1+s\lambda)/2, s,\lambda\in\{\pm 1\}$, we obtain
\begin{align}
&\langle T_{a} \rangle_{\rho_{t+1}}^{\mathrm{HVM}} = \sum_{\nu \in {\cal{S}}} \frac{1+{s_t}\lambda_\nu({a}_t)}{2 p(s_t|{s}_{\prec t})} q_{\rho_t}(\nu)   \lambda_\nu(a) \notag\\
&\displaystyle\stackrel{(\ref{eq:LinearityOfAssignments})}{=} \frac{\sum_{\nu \in {\cal{S}}}q_{\rho_t}(\nu) \lambda_\nu(a)+{s_t}\sum_{\nu \in {\cal{S}}}q_{\rho_t}(\nu) \lambda_\nu(a+a_{t})}{2p(s_t|{s}_{\prec t})} \notag
\end{align}
Finally, by induction hypothesis, we arrive at
\begin{align}
\langle T_{a} \rangle_{\rho_{t+1}}^{\mathrm{HVM}} &=\displaystyle{\frac{\langle T_a\rangle_{\rho_t}  +{s_t}  \langle T_{a+a_t}\rangle_{\rho_t}}{2p(s_t|{s}_{\prec t})}} \stackrel{(\ref{eq:PauliAdditivity})}{=} \displaystyle{\frac{\text{tr} \left( \rho_t \frac{I+{s_t} T_{a_t}}{2} T_a \right)}{p(s_t|{s}_{\prec t})}  } \notag\\
&= \displaystyle{\text{tr} \left( \frac{\left[\frac{I+{s_t} T_{a_t}}{2} \rho_t \frac{I+{s_t} T_{a_t}}{2}\right]}{p(s_t|{s}_{\prec t})}  T_a \right)}=\mathrm{tr}\left(\rho_{t+1}T_a\right)\notag
\end{align}
which is again the quantum mechanical prediction. 
\end{proof}
Finally, we  prove our main result.
\begin{proof}[Proof of theorem \ref{Res1}]  We derive a contradiction by  assuming (A1) that $\scheme$ is universal  and (A2) that all magic states in $\magic$ are noncontextual. We first consider  the computation to be error-free and drop this assumption at the end. 

Recall that, by assumption ($\star$), two complementary Pauli observables, denoted $Z_i, X_i\in \cheap$ w.l.o.g., can be measured on any qubit.  By (A1), the scheme  $\scheme$ can prepare the encoded  GHZ state $\ket{\psi}$ that is uniquely stabilized by $X_1X_2X_3$, $-X_1Z_2Z_3$, $-Z_1X_2Z_3$, $ -Z_1Z_2X_3$. Furthermore, $\scheme$ can also  infer the value of any correlator of form $A_1A_2A_3$ with $ A_i\in\{X_i,Z_i\}$ (in particular,  $\ket{\psi}$'s stabilizers) by measuring  $A_1,A_2,A_3$ individually. Quantum mechanics predicts 
\begin{equation}\notag
\left\langle X_1X_2X_3 -  X_1Z_2Z_3 -  Z_1X_2Z_3 - Z_1Z_2X_3\right\rangle_\psi^{\mathrm{QM}}= 4.
\end{equation}
On the other hand,  by (A2) and Lemma \ref{Lemma:HVMSimul},  there exists an NCHVM for $\ket{\psi}$ with respect to all quadruples of form $\{A_1,A_2,A_3,A_1A_2A_3, A_i\in X_i,Z_i\}$.  Using constraint\mot(\ref{eq:LinearityOfAssignments}) for noncontextual value assignments,  we derive an inequality for the NCHVM's prediction  
\begin{equation}\notag
\left\langle X_1X_2X_3 -  X_1Z_2Z_3 -  Z_1X_2Z_3 -  Z_1Z_2X_3\right\rangle_\psi^{\mathrm{HVM}}\leq 2,
\end{equation} 
originally due to  Mermin \cite{Mermin90extreme_Q_entanglement}, which  contradicts quantum mechanics. Hence, either (A1) or (A2) must be false.

Last, our argument holds if  arbitrarily small  errors are present because  the HVM's prediction  deviates from the quantum mechanical one by a finite amount (larger than\mot2).
\end{proof}

\emph{A qubit QCSI scheme powered by contextuality.}-- Here we prove that for any number $n$ of qubits there exists a universal qubit QCSI scheme $\scheme$ that fulfills the conditions \ref{con:SIC} and \ref{con:TOM}. The $\cheap$ measurements available in this scheme are all single-qubit Pauli measurements, the group $\mathcal{G}$ contains all single-qubit Clifford gates, and the magic state is locally unitarily equivalent to a 2D cluster state. This family of examples demonstrates that the classification provided by our main result (Theorem \ref{Res1}) is not empty.

We now show that single-qubit Pauli measurents satisfy \ref{con:SIC} and \ref{con:TOM}. First, note that the value of any Pauli observable can be inferred by measuring its single-qubit tensor components; hence,  local QCSI  fulfills \ref{con:TOM}. Second, we show\mot\ref{con:SIC} is also met by giving a  NCHVM for the  mixed state $\rho=I/2^n$ with respect to single-qubit operations. The most general joint measurement in $\joint$ that we can implement with the latter is to measure $n$   single-qubit Paulis  $\sigma_{1},\ldots,\sigma_{n}$ on distinct qubits, which lets us  infer the value of any observable $\gamma\bigotimes_{i=1}^n \sigma_{i}^{\alpha_{i}}$ with $\alpha\in\Z_2^n,\gamma\in\R$. Hence, the function $\lambda_{0}(\bigotimes_{i=1}^n \sigma_{i}^{\alpha_{i}}):=1$, which is a joint eigenvalue of  $\{\bigotimes_{i=1}^n \sigma_{i}^{\alpha_{i}}:\alpha\in\Z_2^n\}$,  extends linearly to a value assignment fulfilling Def.\ \ref{HVM1}(i). Picking $\mathcal{T}_n=\{I,X,Y,Z\}^{\otimes n}$, we obtain an  NCHVM via (\ref{Transl})  with  value assignments $\lambda_b(T_a):=(-1)^{[a,b]},b\in\Z_2^{2n}$ wherein  $\rho$ corresponds to a probability distribution $q_\rho(b):=1/2^{2n}$: indeed, our HVM predicts $\langle \gamma T_a \rangle_\rho=\gamma$ for  $T_a=T_0= I$ and 0 otherwise, matching the quantum mechanical prediction---this can be checked by computing the average of  $\lambda_b(T_a)$ over $b$ in each case.

\begin{figure}[t]
\begin{center}
\centering
\includegraphics[width=1\linewidth]{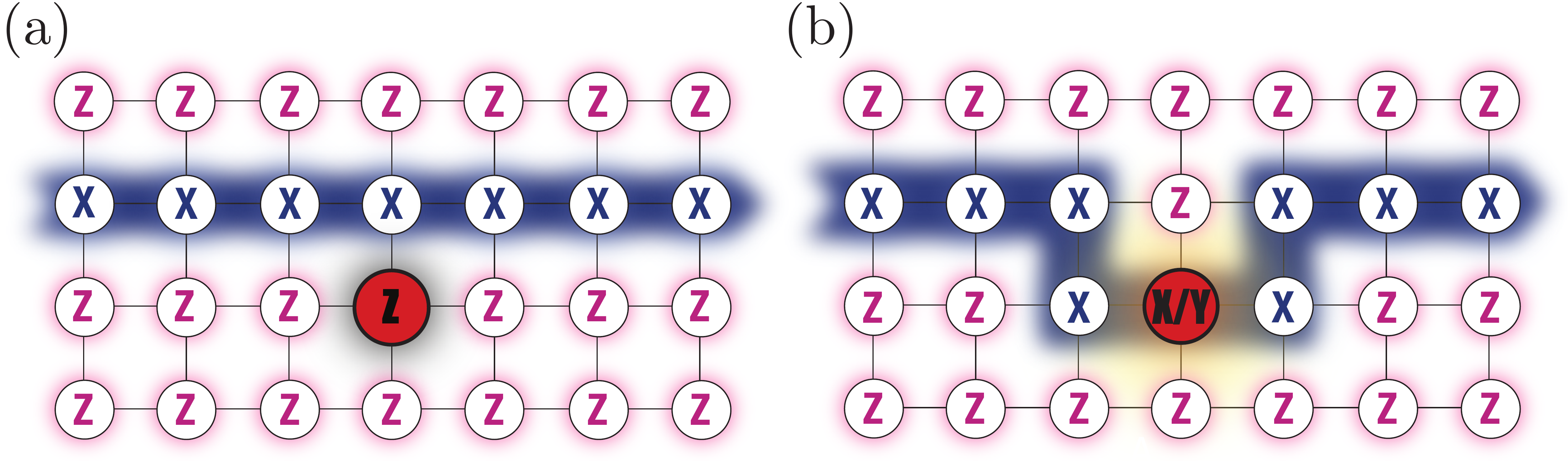}
\caption{\label{cluster}QCSI with modified cluster state $\ket{\Psi}$ and single-qubit  $X_i$, $Y_j$, $Z_k$ Pauli measurements: the $Z$ measurements are used to cut out of the plane a web corresponding to some layout of a quantum circuit, while the $X$ measurements drive the MBQC simulation of this circuit \cite{MBQC}. By ``re-routing'' a wire piece, one may choose between implementing and not implementing a non-Clifford gate. (a) Identity operation on the logical state space. (b) $X$ or $Y$ is measured adaptively to implement a logical $e^{-\imun\uppi/8\, Z}$ gate in MBQC \cite{MBQC}. }
\vspace{-10pt}
\end{center}
\end{figure}

Last, we present a family of magic states that  promote our local QCSI scheme to universality.  Unlike in  standard magic state distillation \cite{BK}, which relies on product magic states, our scheme has no entangling operations and requires entanglement to be present in the input to be universal. We show that a possibility is to use a modified cluster state $\ket{\Psi}$ that contains cells as in Fig.\ \ref{cluster} with ``red-site'' qubits that are locally rotated by a $T$ gate $e^{-\imun\uppi/8\, Z}$. Our approach is to use such state to simulate a universal scheme of measurement based quantum computation based on adaptive local measurements $\{Z,X,Y,X\pm Y/\sqrt{2}\}$ on a regular 2D cluster state\mot{}\cite{MBQC}.  Local Pauli measurements are available by assumption. Now, an on-site measurement of $X$ or $Y$ on one of the red qubits of  $\ket{\Psi}$ has the same effect as measuring $(X\pm Y)/\sqrt{2}$ on a cluster state.  To complete the simulation, it is enough to reroute the  measurement-based computation through a red site (this can be done with the available $X$ measurements\mot{}\cite{MBQC}) whenever a measurement of $(X\pm Y)/\sqrt{2}$ is needed. (See Fig.\ 2 for illustration.)

Note that an alternative resource state for one-qubit Pauli measurements is the so-called ``union-jack'' hypergraph state of Ref.\mot{}\cite{Miller1508.02695}.\\

\emph{Conclusion.}---In this Letter we investigated the role of contextuality in qubit QCSI and proved that it is a necessary resource for all such schemes that meet a simple  minimal condition: namely, that the allowed measurements do not exhibit state-independent contextuality. Our result applies if and only if contextuality emerges as a physical property possessed by  quantum states  (with respect to the measurements available in the computational model). We extended earlier results on odd-prime dimensional qudits \cite{NegWi,Howard} and rebits \cite{ReWi}, and thereby completed establishing contextuality as a resource in QCSI in arbitrary prime dimensions. We conjecture that this result generalizes to all composite dimensions\mot{}\cite{Gottesman98Fault_Tolerant_QC_HigherDimensions} (the composite odd case was recently covered after completion of this work\mot{}\cite{Delfosse16_equivalence}) and to algebraic extensions of QCSI models based on normalizer gates\mot\cite{VDNest_12_QFTs,BermejoVega_12_GKTheorem,BermejoLinVdN13_Infinite_Normalizers,BermejoLinVdN13_BlackBox_Normalizers,BermejoVegaZatloukal14Hypergroups}.  Further, we demonstrated the applicability of our result to a concrete qubit QCSI scheme that does not exhibit state independent contextuality while retaining tomographic completeness.
 
Finally, we refer to a companion paper\mot{}\cite{Raussendorf15QubitQCSI} where we investigate  the role of Wigner functions in qubit QCSI. There, we use Wigner functions to motivate the near-classical sector of the free operations in qubit QCSI, and relate their Wigner-function negativity to contextuality and hardness of classical simulation.  In comparison,  in this Letter,
constraint\mot{}\ref{con:SIC} completely removes the need to introduce Wigner functions, and leads us to the simplest and most general proof that contextuality can be a resource in  qubit QCSI that we are aware of. For this reason, we regard the establishing of condition \ref{con:SIC} as a fundamental structural insight of our Letter.\\

\emph{Acknowledgement.}---We thank David T.\ Stephen and the anonymous reviewers for comments on the manuscript. JBV acknowledges financial support by by Horizon 2020  (640800-AQuS-H2020-FETPROACT-2014). ND is funded by Institute for Quantum Information and Matter (IQIM), the National Science Foundation Physics Frontiers Center (PHY-1125565) and the Gordon and Betty Moore Foundation (GBMF-2644). CO is supported by the Natural Sciences and Engineering Research Council of Canada (NSERC). RR is funded by NSERC and the Canadian Institute for Advanced Research (CIFAR). RR is scholar of the CIFAR Quantum Information Science program.

\bibliography{database}

\newpage

\appendix

\end{document}